\documentclass[a4paper,12pt]{elsarticle}
\usepackage[T1]{fontenc}
\usepackage[english]{babel}
\usepackage{ae,aecompl}
\usepackage{amsmath, amsthm}
\usepackage{amssymb}
\usepackage[utf8]{inputenc}
\usepackage[section] {placeins}
\usepackage[ruled, vlined, linesnumbered]{algorithm2e}
\newtheorem {Theorem}                 {Theorem}         [section]
\newtheorem {theorem}      [Theorem]  {Theorem}
\newtheorem {myalgorithm}    [Theorem]  {Algorithm}

\newtheorem {lemma}        [Theorem]  {Lemma}

\input amssym.def
\input amssym

\usepackage{pgfplots}
\usepackage{verbatim}
\usepackage{subfigure}
\usepackage{tikz}
\usetikzlibrary{shapes,arrows,backgrounds,mindmap}
\usepackage{titlesec}
\journal{arXiv}

\bibliographystyle{elsarticle-num}
\address{}

\begin{document} 
\begin{frontmatter}
\title{$2$-blocks in strongly biconnected directed graphs}
\author{Raed Jaberi}
\begin{abstract}  

	A directed graph $G=(V,E)$ is called strongly biconnected if $G$ is strongly connected and the underlying graph of $G$ is biconnected. A strongly biconnected component of a strongly connected graph $G=(V,E)$ is a maximal vertex subset $L\subseteq V$ such that the induced subgraph on $L$ is strongly biconnected. Let $G=(V,E)$ be a strongly biconnected directed graph. A $2$-edge-biconnected block in $G$ is a maximal vertex subset $U\subseteq V$ such that for any two distict vertices $v,w \in U$ and for each edge $b\in E$, the vertices $v,w$ are in the same strongly biconnected components of $G\setminus\left\lbrace b\right\rbrace $. A $2$-strong-biconnected block in $G$ is a maximal vertex subset $U\subseteq V$ of size at least $2$ such that for every pair of distinct vertices $v,w\in U$ and for every vertex $z\in V\setminus\left\lbrace v,w \right\rbrace $, the vertices $v$ and $w$ are in the same strongly biconnected component of $G\setminus \left\lbrace v,w \right\rbrace $. In this paper we study $2$-edge-biconnected blocks and $2$-strong biconnected blocks.
	 
\end{abstract} 
\begin{keyword}
Directed graphs  \sep Graph algorithms \sep Strongly biconnected directed graphs, $2$-blocks
\end{keyword}
\end{frontmatter}
\section{Introduction}
Let $G=(V,E)$ be a directed graph. A $2$-edge block of $G$ is a maximal vertex subset $L_e \subseteq V$ with $|L_e|>1$ such that for each pair of distinct vertices $x,y\in L_e$, $G$ contains two edge-disjoint paths from $x$ to $y$ and two edge-disjoint paths from $y$ to $x$. A $2$-strong block of $G$ is a maximal vertex subset $B_s\subseteq V$ with $|B_s|>1$ such that for each  pair of distinct vertices $x,y\in B_s$ and for every vertex $w\in V\setminus\left\lbrace x,y\right\rbrace $, $x$ and $y$ belong to the same strongly connected component of $G\setminus \left\lbrace w \right\rbrace $. $G$ is called strongly biconnected if $G$ is strongly connected and the underlying graph of $G$ is biconnected. This class of directed graphs was introduced by Wu and Grumbach \cite{WG2010}. A strongly biconnected component of $G$ is a maximal vertex subset $C\subseteq V$ such that the induced subgraph on $C$ is strongly biconnected \cite{WG2010}. Let $G=(V,E)$ be a strongly biconnected directed graph. An edge $e\in E$ is a b-bridge if the subgraph $G\setminus \left\lbrace e\right\rbrace =(V,E\setminus \left\lbrace  e\right\rbrace) $ is not strongly biconnected. A vertex $w\in V$ is a b-articulation point if  $G\setminus \left\lbrace w\right\rbrace$ is not strongly biconnected, where $G\setminus \left\lbrace w\right\rbrace$ is the subgraph obtained from $G$ by deleting $w$. $G$ is $2$-edge-strongly-biconnected (respectively, $2$-vertex-strongly biconnected) if $|V|>2$ and $G$ has no b-bridges (respectively, b-articulation points). A $2$-edge-biconnected block in $G$ is a maximal vertex subset $U\subseteq V$ such that for any two distict vertices $v,w \in U$ and for each edge $b\in E$, the vertices $v,w$ are in the same strongly biconnected components of $G\setminus\left\lbrace b\right\rbrace $. The $2$-edge blocks of $G$ are disjoint \cite{Jaberi15}. Notice that $2$-edge-biconnected blocks are not necessarily disjoint, as shown in Figure \ref{fig:blockssb}.
 \begin{figure}[htp]
	\centering
	
		\begin{tikzpicture}[xscale=2]
		\tikzstyle{every node}=[color=black,draw,circle,minimum size=0.9cm]
		\node (v1) at  (3.6,4){$1$};
		\node (v2) at  (4,0){$2$};
		\node (v3) at (1, 5) {$3$};
		\node (v4) at (2,5.6) {$4$};
		\node (v5) at (3,1) {$5$};
		\node (v6) at (3.5,2.5) {$6$};
		\node (v7) at  (0.5,7){$7$};
		\node (v8) at (5,4.5) {$8$};
		\node (v9) at  (-1,2.5){$9$};
		\node (v10)at  (2,1.4){$10$};
		\node (v11) at (1,1.4) {$11$};
		\node (v12) at  (2.5,-2){$12$};
        \node (v13) at  (1,-3){$13$};
        \node (v14) at (-1,0) {$14$};
        \node (v15) at (5,2.5) {$15$};
        \node (v16) at (3.7,7) {$16$};
		\begin{scope}   
		\tikzstyle{every node}=[auto=right]   
		\draw [-triangle 45] (v12) to [bend left ] (v13);
		\draw [-triangle 45] (v10) to  (v13);
		\draw [-triangle 45] (v13) to (v9);
			\draw [-triangle 45] (v11) to (v12);
		\draw [-triangle 45] (v14) to (v11);
		\draw [-triangle 45] (v9) to   (v11);
		\draw [-triangle 45] (v11) to  (v10);
		\draw [-triangle 45] (v13) to [bend left ]  (v14);
		\draw [-triangle 45] (v14) to  [bend left ] (v9);
		\draw [-triangle 45] (v10) to  (v12);
		
			\draw [-triangle 45] (v12) to  (v14);
			\draw [-triangle 45] (v9) to  [bend left ] (v10);
			\draw [-triangle 45] (v9) to  (v7);
			\draw [-triangle 45] (v7) to  (v4);
			\draw [-triangle 45] (v10) to  (v5);
			\draw [-triangle 45] (v5) to  (v4);
			\draw [-triangle 45] (v4) to  (v2);
			\draw [-triangle 45] (v4) to  (v6);
			\draw [-triangle 45] (v4) to  (v3);
			\draw [-triangle 45] (v3) to  (v14);
			\draw [-triangle 45] (v6) to  (v15);
			
			\draw [-triangle 45] (v2) to  (v15);
			\draw [-triangle 45] (v4) to  (v1);
			\draw [-triangle 45] (v1) to  (v15);
			\draw [-triangle 45] (v15) to  (v8);
			\draw [-triangle 45] (v8) to  (v4);
			\draw [-triangle 45] (v15) to  (v16);
			\draw [-triangle 45] (v16) to  (v4);
			\draw [-triangle 45] (v2) to  (v12);
		\end{scope}
		\end{tikzpicture}
	\caption{A strongly biconnected directed graph $G$. The vertex subset $\left\lbrace 9,10,11,12,13,14,4,15 \right\rbrace $ is a $2$-edge block of $G$. Notice that the vertices $15,12$ are not in the same $2$-edge-biconnected block because $15$ and $12$ are not in the same strongly biconnected component of $G\setminus\left\lbrace (2,15) \right\rbrace $. Moreover, $G$ has two $2$-edge-biconnected blocks $U_1=\left\lbrace 9,10,11,12,13,14,4 \right\rbrace $ and $U_2=\left\lbrace 4,15 \right\rbrace $.  $U_1$ and $U_2$ share vertex $4$}  
	\label{fig:blockssb} 
\end{figure}
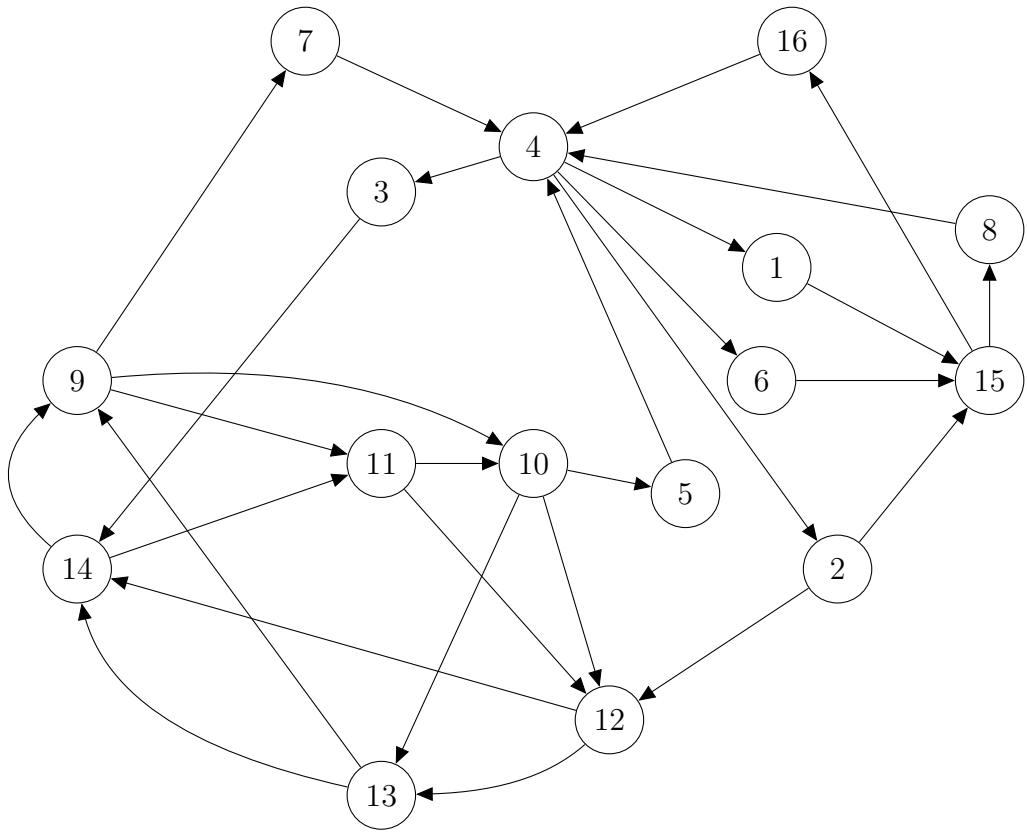

 A $2$-strong-biconnected block in $G$ is a maximal vertex subset $U\subseteq V$ of size at least $2$ such that for every pair of distinct vertices $v,w\in U$ and for every vertex $z\in V\setminus\left\lbrace v,w \right\rbrace $, the vertices $v$ and $w$ are in the same strongly biconnected component of $G\setminus \left\lbrace z \right\rbrace $.

Blocks, articulation points, and bridges of an undirected graph can be calculated in $O(n+m)$ time \cite{T72,T74,JS13}. In\cite{G10},  Georgiadis presented a linear time algorithm to test whether a directed graph is $2$-vertex-connected.  Strong articulation points and strong bridges of a directed graph can be computed in $O(n+m)$ time \cite{ILS12,FILOS12}. Jaberi \cite{Jaberi15} presented algorithms for computing $2$-strong blocks, and $2$-edge blocks of a directed graph.  Georgiadis et al. \cite{GILP14SODA,GILP14VertexConnectivity} gave linear time algorithms for determining $2$-edge blocks and $2$-strong blocks. 
Wu and Grumbach \cite{WG2010} introduced the concept of strongly biconnected directed graphs and the concept of strongly biconnected components. Jaberi \cite{Jaberi2020} studied b-bridges in strongly biconnected directed graphs. In this paper we study $2$-edge-biconnected blocks and $2$-strong biconnected blocks.

\section{$2$-edge-biconnected blocks}
In this section we study $2$-edge-biconnected blocks and present an algorithm for computing them. 
Let $G=(V,E)$ be a strongly biconnected directed graph. For every pair of distinct vertices $x,y\in V$, we write $x \overset{e}{\leftrightsquigarrow } y$ if for any edge $b\in E$, the vertices $x,y$ belong to the same strongly biconnected component of $G\setminus\lbrace b\rbrace$. A $2$-edge-biconnected blocks in $G$ is a maximal subset of vertices $ U\subseteq V$ with $|U|>1$ such that for any two vertices $x,y \in U$, we have $x \overset{e}{\leftrightsquigarrow } y$. A $2$-edge-strongly-biconnected component in $G$ is a maximal vertex subset $C_{2eb}\subseteq V$ such that the induced subgraph on $C_{2eb}$ is $2$-edge-strongly biconnected. Note that the strongly biconnected directed graph in Figure \ref{fig:blockssb} contains one $2$-edge-strongly biconnected component $\left\lbrace 9,10,11,12,13,14 \right\rbrace $, which is a subset of the $2$-edge-biconnected block $\left\lbrace 9,10,11,12,13,14,4 \right\rbrace $.

\begin{lemma}
Let $G=(V,E)$ be a strongly biconnected directed graph and let $C_{2eb}$ be a $2$-edge-strongly biconnected component of $G$. Then  $C_{2eb}$ is a subset of a $2$-edge-biconnected block of $G$. 
\end{lemma}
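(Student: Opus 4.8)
The plan is to verify directly that every pair of vertices of $C_{2eb}$ is related by $\overset{e}{\leftrightsquigarrow}$ in $G$, and then to promote $C_{2eb}$ to an actual $2$-edge-biconnected block by a maximality argument. First I would record two easy preliminary facts about $C_{2eb}$: since $G[C_{2eb}]$ is $2$-edge-strongly biconnected, it has more than two vertices, so $|C_{2eb}|>1$ as required in the block definition; and being $2$-edge-strongly biconnected presupposes being strongly biconnected (one needs strong biconnectivity before the notion of a b-bridge makes sense), so $G[C_{2eb}]$ is in particular strongly biconnected.

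The core step is the following. Fix arbitrary distinct $x,y\in C_{2eb}$ and an arbitrary edge $b\in E$; I want $x$ and $y$ to lie in a common (maximal) strongly biconnected vertex subset of $G\setminus\{b\}$. I would split on whether $b$ is an edge of the induced subgraph $G[C_{2eb}]$. If it is not, then $G[C_{2eb}]$ is itself an induced subgraph of $G\setminus\{b\}$, and it is strongly biconnected, so $C_{2eb}$ is contained in some strongly biconnected component of $G\setminus\{b\}$. If $b\in E(G[C_{2eb}])$, then since $G[C_{2eb}]$ has no b-bridges, $G[C_{2eb}]\setminus\{b\}$ is still strongly biconnected; it is an induced subgraph of $G\setminus\{b\}$ on vertex set $C_{2eb}$, so again $C_{2eb}$ lies inside one strongly biconnected component of $G\setminus\{b\}$. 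In either case all of $C_{2eb}$, in particular $x$ and $y$, sits in a single strongly biconnected component of $G\setminus\{b\}$; as $b$ was arbitrary this gives $x\overset{e}{\leftrightsquigarrow}y$.

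It remains to conclude. The set $C_{2eb}$ satisfies $|C_{2eb}|>1$ together with the defining pairwise property of a $2$-edge-biconnected block, so by finiteness of $V$ it extends to a set $U\supseteq C_{2eb}$ that is maximal among subsets of $V$ of size $>1$ in which every pair is related by $\overset{e}{\leftrightsquigarrow}$. Any set strictly containing such a $U$ with the pairwise property would still contain $C_{2eb}$ and contradict the choice of $U$, so $U$ is genuinely a $2$-edge-biconnected block of $G$ with $C_{2eb}\subseteq U$.

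The argument is short; the one point that needs care is the case distinction on $b$, combined with the observation that a strongly biconnected induced subgraph always lies inside a maximal strongly biconnected vertex subset, so that $x$ and $y$ indeed end up in the \emph{same} component of $G\setminus\{b\}$. No transitivity of $\overset{e}{\leftrightsquigarrow}$ is required for the final maximality step.
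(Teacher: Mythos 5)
Your proof is correct and follows essentially the same route as the paper's: the paper's two-line argument simply observes that deleting any edge $e$ from $G[C_{2eb}]$ leaves it strongly biconnected (by the no-b-bridge property), hence $x\overset{e}{\leftrightsquigarrow}y$ for all pairs in $C_{2eb}$. Your version just spells out the details the paper leaves implicit --- the case split on whether $b$ lies in $G[C_{2eb}]$, the embedding of a strongly biconnected induced subgraph into a maximal one, and the final extension of $C_{2eb}$ to a maximal block --- all of which are sound.
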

\begin{proof}
	Let $x$ and $y$ be distinct vertices in $C_{2eb}$ and let $e\in E$. Let $G[C_{2eb}]$ be the induced subgraph on $C_{2eb}$. 
	By definition, the subgraph obtained from $G[C_{2eb}]$ by deleting $e$ is still strongly biconnected. Therefore, we have $x \overset{e}{\leftrightsquigarrow } y$.
\end{proof}

$2$-edge blocks are disjoint \cite{Jaberi15}. Note that $2$-edge biconnected blocks are not necessarily disjoint. But any two of them share at most one vertex, as illustrated in Figure \ref{fig:blockssb}. 
\begin{lemma}\label{def:lemmaforsb}
	Let $U_1,U_2$ be distinct $2$-edge-biconnected blocks of a strongly biconnected directed graph $G=(V,E)$. Then $|U_1\cap U_2|\leq 1$
\end{lemma}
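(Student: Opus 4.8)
The plan is to argue by contradiction: assume $|U_1\cap U_2|\geq 2$ and fix two distinct vertices $x,y\in U_1\cap U_2$. I will show that then every pair of vertices of $U_1\cup U_2$ is related by $\overset{e}{\leftrightsquigarrow}$. Since $U_1\neq U_2$ and both are maximal, neither contains the other, so $U_1\cup U_2$ properly contains $U_1$; and any set of size $\geq 2$ all of whose pairs are $\overset{e}{\leftrightsquigarrow}$-related is contained in some $2$-edge-biconnected block, contradicting the maximality of $U_1$. The only pairs of $U_1\cup U_2$ not immediately covered by the block property of $U_1$ or of $U_2$ are the crossing pairs $p\in U_1\setminus U_2$, $q\in U_2\setminus U_1$, so everything reduces to: for every such $p,q$ and every edge $b\in E$, the vertices $p$ and $q$ lie in a common strongly biconnected component of $G\setminus\{b\}$.

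To handle a crossing pair I would first isolate two structural facts about strongly biconnected components of an arbitrary directed graph $H$. \emph{(H1) Two distinct strongly biconnected components share at most one vertex:} if $C,C'$ were distinct with $|C\cap C'|\geq 2$, then $H[C\cup C']$ is still strongly connected (concatenate a directed path into a common vertex of $C$ with a directed path out of it inside $C'$) and its underlying graph is still biconnected (the union of two biconnected graphs meeting in $\geq 2$ vertices is biconnected), contradicting maximality of $C$. \emph{(H2) There are no three pairwise distinct strongly biconnected components $C_1,C_2,C_3$ with $C_1\cap C_2=\{a\}$, $C_2\cap C_3=\{b\}$, $C_1\cap C_3=\{c\}$ and $a,b,c$ distinct:} in that ``triangle'' configuration $H[C_1\cup C_2\cup C_3]$ is again strongly connected, and its underlying graph is biconnected, because the union of three biconnected undirected graphs glued in a triangle pattern at single distinct vertices remains biconnected (deleting any one of $a,b,c$, or any private vertex, still leaves the three pieces held together through the remaining junctions), so $C_1\cup C_2\cup C_3$ induces a strongly biconnected subgraph properly containing $C_1$ --- a contradiction. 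Combining these: if $u,v,w$ are pairwise $\overset{e}{\leftrightsquigarrow}$-related, then for every edge $b$ they lie in a single strongly biconnected component of $G\setminus\{b\}$, since the (at most three) components that host $\{u,v\}$, $\{v,w\}$, $\{u,w\}$ pairwise meet in exactly one vertex by (H1), whence by (H2) two of them coincide.

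With this in hand the main argument is short. Given a crossing pair $p\in U_1\setminus U_2$, $q\in U_2\setminus U_1$ --- whose existence for a suitable $q$ follows from maximality of $U_2$ applied to $U_2\cup\{p\}$, and for which one checks $p\neq q$ and $q\notin\{x,y\}$ --- fix any edge $b$ and put $H=G\setminus\{b\}$. The triple $\{x,y,p\}\subseteq U_1$ is pairwise $\overset{e}{\leftrightsquigarrow}$-related, hence lies in one strongly biconnected component $A$ of $H$; likewise $\{x,y,q\}\subseteq U_2$ lies in one component $B$. Since $A\cap B\supseteq\{x,y\}$ has at least two vertices, (H1) forces $A=B$, so $p$ and $q$ lie in the same strongly biconnected component of $G\setminus\{b\}$, as needed. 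The main obstacle I expect is establishing (H2): unlike (H1), it is not the familiar ``blocks share at most one vertex'' statement, and the delicate point is the purely undirected claim that a triangular amalgam of biconnected graphs along single vertices is biconnected --- for which one must first be sure that the three pairwise intersections are genuinely disjoint singletons (this does follow from (H1) together with the distinctness of $x,y,p,q$) before invoking that undirected geometry.
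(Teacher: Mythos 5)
Your proof is correct, and at its core it rests on the same two pillars as the paper's proof: a contradiction launched from $|U_1\cap U_2|\geq 2$, the observation that two distinct strongly biconnected components of a digraph share at most one vertex (your (H1), which the paper uses tacitly in its case~1 when it asserts $w\notin C_x\cap C_y$), and a final appeal to maximality. The execution differs, though, in a way worth noting. The paper fixes one allegedly ``bad'' crossing pair $x,y$ together with one component $C_x\ni x$ and one component $C_y\ni y$, and splits into cases according to where the two shared vertices sit relative to $C_x$ and $C_y$; in doing so it silently assumes that if $w\in C_x$ and $y\in C_y$ with $C_x\neq C_y$ then $w$ and $y$ lie in no common component --- which is not automatic, since a vertex may belong to several strongly biconnected components, so $w$ and $y$ could a priori cohabit a third component. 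Your (H2) (no ``triangle'' of three components glued pairwise at three distinct single vertices) is exactly the extra structural fact needed to rule out such third-component escapes, and your derived statement that three pairwise $\overset{e}{\leftrightsquigarrow}$-related vertices always share a single component makes the subsequent argument ($A\supseteq\{x,y,p\}$, $B\supseteq\{x,y,q\}$, $|A\cap B|\geq 2$, hence $A=B$ by (H1)) clean and gap-free. So your route is slightly longer --- you must verify the purely undirected claim that a triangular amalgam of biconnected graphs at three distinct cut points is biconnected, which does check out --- but it buys a proof that is actually airtight where the paper's two-case analysis is only heuristic, and it yields the reusable intermediate fact that the union $U_1\cup U_2$ is entirely $\overset{e}{\leftrightsquigarrow}$-related, which is in the spirit of the paper's Lemma~\ref{def:lemmafor2strongbiconnectedblocks}. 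One cosmetic point: you reuse the letter $b$ both for the deleted edge and for a junction vertex in (H2); rename one of them.
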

\begin{proof}
		Assume for the purpose of contradiction that  $|U_1\cap U_2|>1$.  Let $x \in U_1 \setminus (U_1\cap U_2)$ and let $y\in U_2 \setminus (U_1\cap U_2)$. Let $v,w \in U_1\cap U_2$ with $v\neq w$ and let $b\in E$. Notice that $x,v$ belong to the same stongly connected component of $G\setminus\left\lbrace b\right\rbrace $ since $x \overset{e}{\leftrightsquigarrow }v$. Moreover,  $v,y$ belong to the same stongly connected component of $G\setminus\left\lbrace b\right\rbrace $. Consequently, $x,y$ are in the same stongly connected component of $G\setminus\left\lbrace b\right\rbrace $. Then, the vertices $x,y$ do not lie in the same strongly biconnected component of $G\setminus\left\lbrace b\right\rbrace $.  Suppose that $C_x,C_y$ are two strongly biconnected components of $G\setminus\left\lbrace b\right\rbrace $ such that $x\in C_x$ and $y \in C_y$. There are two cases to consider.
		\begin{enumerate}
			\item $v \in C_x \cap C_y $. In this case $w\notin C_x \cap C_y $. Suppose without loss of generality that $w \in C_x$. Then $w,y$ are not in the same stongly biconnected components of $G\setminus\left\lbrace b\right\rbrace $. But this contradicts that $w \overset{e}{\leftrightsquigarrow }y$
			\item $v \notin C_x \cap C_y $. Suppose without loss of generality that $v \in C_x$. Then $v,y$ do not lie in the same strongly biconnected component of $G\setminus\left\lbrace b\right\rbrace $. But this contradicts that $v \overset{e}{\leftrightsquigarrow }y$
		\end{enumerate} 	
\end{proof}
Using similar arguments as in Lemma \ref{def:lemmaforsb}, we can prove the following.
\begin{lemma} \label{def:lemmafor2strongbiconnectedblocks}
		Let $G=(V,E)$ be a strongly biconnected directed graph and let $\left\lbrace w_0,w_1,\ldots,w_t\right\rbrace \subseteq V $ such that $w_0 \overset{e}{\leftrightsquigarrow } w_t$ and $w_{i-1} \overset{e}{\leftrightsquigarrow } w_i$ for $i \in\lbrace 1,2\ldots,t\rbrace$. Then $\left\lbrace w_0,w_1,\ldots,w_t\right\rbrace$ is a subset of a $2$-edge biconnected block of $G$.
\end{lemma}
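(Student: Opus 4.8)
The plan is to reduce the statement to a purely local assertion and then to a counting argument on the strongly‑biconnected‑component structure of the graphs $G\setminus\{b\}$. First I would observe that it suffices to prove $w_i\overset{e}{\leftrightsquigarrow}w_j$ for \emph{every} pair $0\le i<j\le t$: if that holds, then $\{w_0,\dots,w_t\}$ is a set of vertices pairwise related by $\overset{e}{\leftrightsquigarrow}$, and since $\overset{e}{\leftrightsquigarrow}$ is exactly the relation ``lying in a common $2$‑edge‑biconnected block'', any finite such set is contained in a maximal one, i.e.\ in a $2$‑edge‑biconnected block (of size at least $2$). Fixing $i<j$ and unwinding the definition of $\overset{e}{\leftrightsquigarrow}$, it is then enough to show that for every edge $b\in E$ the vertices $w_i$ and $w_j$ lie in a common strongly biconnected component of $G\setminus\{b\}$. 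So from now on I fix $i<j$ and an edge $b$, and work inside $G\setminus\{b\}$.

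Since each of $w_0\overset{e}{\leftrightsquigarrow}w_1,\dots,w_{t-1}\overset{e}{\leftrightsquigarrow}w_t,\ w_t\overset{e}{\leftrightsquigarrow}w_0$ puts the two vertices in a common strongly biconnected component of $G\setminus\{b\}$, hence in a common strongly connected component, all of $w_0,\dots,w_t$ lie in a single strongly connected component $H$ of $G\setminus\{b\}$. I would then use the tree structure of the strongly biconnected components of $H$: they, together with the vertices shared by pairs of them (the b‑articulation points of $H$), form a block–cut‑tree analogue $T$, in which distinct strongly biconnected components share at most one vertex — the fact already invoked in the proof of Lemma~\ref{def:lemmaforsb}. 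Represent each $w_k$ by the ``star'' $A_k$ consisting of the strongly biconnected components of $H$ that contain $w_k$ (a single component, or, if $w_k$ is a b‑articulation point of $H$, all the components around it); then each $A_k$ is a subtree of $T$, and $w_k$ and $w_{k+1}$ lie in a common strongly biconnected component of $G\setminus\{b\}$ if and only if $A_k\cap A_{k+1}\neq\emptyset$. Thus the hypotheses give subtrees $A_0,\dots,A_t$ of the tree $T$ whose cyclically consecutive members intersect, and the goal is to promote this to $A_i\cap A_j\neq\emptyset$.

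The heart of the proof — and the step I expect to be the main obstacle — is this promotion, which must use both that the $A_k$ are stars and that the $w_k$ are distinct. Suppose $A_i\cap A_j=\emptyset$. Then $T$ has a unique edge $e^\ast$ whose removal separates $A_i$ from $A_j$, and, each $A_k$ being a ball of radius one about its centre, any subtree meeting both $A_i$ and $A_j$ must be connected across $e^\ast$, hence be centred at one of the two endpoints of $e^\ast$. Traversing each of the two arcs of the cyclic chain from $i$ to $j$, the stars pass from one side of $e^\ast$ to the other, so on each arc some star straddles $e^\ast$; this produces two vertices $w_p\neq w_{p'}$ — distinct because the two arcs share only the indices $i,j$ and the $w$'s are distinct — each of whose stars is centred at an endpoint of $e^\ast$. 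A short case analysis on the two endpoints of $e^\ast$ (in the bipartite tree $T$, one is a strongly‑biconnected‑component node and the other a b‑articulation‑point node) now yields a contradiction: a b‑articulation‑point node hosts a unique vertex, so $w_p$ and $w_{p'}$ cannot both be hosted there; two distinct b‑articulation points lying on two common strongly biconnected components would close a cycle in $T$; and the remaining possibilities force $w_p$ and $w_{p'}$ into a common strongly biconnected component, which, traced back along the two arcs, forces $w_i$ and $w_j$ into one too, contradicting $A_i\cap A_j=\emptyset$. Hence $A_i\cap A_j\neq\emptyset$; since $i,j,b$ were arbitrary, $w_i\overset{e}{\leftrightsquigarrow}w_j$ for all $i,j$, and the first paragraph concludes.

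The same idea can be reorganised as an induction on $t$ that parallels Lemma~\ref{def:lemmaforsb} more visibly: for $t\le 2$ the set $\{w_0,\dots,w_t\}$ is already an $\overset{e}{\leftrightsquigarrow}$‑clique; for $t\ge 3$ the argument of the previous paragraph, applied to the non‑consecutive pair $(i,j)=(0,2)$, shows $w_0\overset{e}{\leftrightsquigarrow}w_2$, and splitting the cyclic chain at this chord produces two shorter cyclic chains whose vertex sets lie, by the inductive hypothesis, in $2$‑edge‑biconnected blocks $B,B'$ with $|B\cap B'|\ge 2$, whence $B=B'\supseteq\{w_0,\dots,w_t\}$ by Lemma~\ref{def:lemmaforsb}. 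Either way, the delicate point is the case analysis on $e^\ast$ — precisely, controlling how a b‑articulation point and a strongly biconnected component can sit on a single edge of the component tree $T$ — and that is where essentially all of the work lies.
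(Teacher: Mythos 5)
Your proposal is correct, and it supplies considerably more than the paper does: the paper gives no argument for this lemma at all, only the remark that it follows ``using similar arguments as in Lemma~\ref{def:lemmaforsb}'', whose written proof really only treats the configuration of a four-cycle with a chord. Your route --- pass to the strongly connected component $H$ of $G\setminus\{b\}$ containing all the $w_k$, organize the strongly biconnected components of $H$ into a block--cut tree $T$, replace each $w_k$ by its star $A_k$ of components, and rule out $A_i\cap A_j=\emptyset$ via a separating tree edge $e^\ast$ and two straddling stars coming from the two arcs of the cycle --- is exactly the genuinely missing content, and it correctly isolates the two facts that make the promotion work: each $A_k$ is a ball of radius one (for arbitrary subtrees, cyclically consecutive intersection does \emph{not} imply pairwise intersection), and the $w_k$ are distinct (without distinctness the lemma is false; e.g.\ the cyclic chain $15,4,12,4$ in Figure~\ref{fig:blockssb} satisfies the hypothesis but $15$ and $12$ share no $2$-edge-biconnected block). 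Two points still need to be written out. First, the tree structure of $T$ is used but not justified; it follows from maximality exactly as for undirected blocks: a cycle of strongly biconnected components glued at distinct shared vertices would have a union whose induced subgraph is again strongly connected with biconnected underlying graph, contradicting maximality --- the same maximality argument that yields $|C\cap C'|\le 1$, the fact Lemma~\ref{def:lemmaforsb} relies on. Second, your closing case analysis can be tightened: a star containing both endpoints of the tree edge $e^\ast$ must be centred at one of them (a common neighbour of two adjacent tree nodes would close a triangle), only the star of a shared (articulation-type) vertex contains more than one node, and exactly one endpoint of $e^\ast$ is a node of that type; hence both straddling stars are centred at that same node, forcing $w_p=w_{p'}$ and contradicting distinctness outright. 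Your alternative induction on $t$ is also sound, but note that it does not avoid the straddling argument --- it merely repackages it as the reduction to the chord $(0,2)$, after which Lemma~\ref{def:lemmaforsb} merges the two sub-blocks.
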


\begin{lemma}\label{def:verticesarenotbbridges}
Let $G=(V,E)$ be a strongly biconnected directed graph and let $v,w$ be two distinct vertices in $G$. Let $b$ be an edge in $G$ such that $b$ is not a b-bridge. Then, the vertices $v,w$ are in the same strongly biconnected component of $G\setminus\left\lbrace
 b \right\rbrace $ 
\end{lemma}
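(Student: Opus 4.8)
The plan is simply to unwind the definitions, so I do not expect any genuine difficulty here. First I would invoke the definition of a b-bridge: since, by hypothesis, $b$ is \emph{not} a b-bridge, the subgraph $G\setminus\{b\}=(V,E\setminus\{b\})$ is strongly biconnected. In particular $G\setminus\{b\}$ is strongly connected, which is exactly what is needed for the notion of a strongly biconnected component of $G\setminus\{b\}$ to be well defined.

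Next I would show that the whole vertex set $V$ is a strongly biconnected component of $G\setminus\{b\}$. The subgraph of $G\setminus\{b\}$ induced by $V$ is $G\setminus\{b\}$ itself, which we have just observed is strongly biconnected; hence $V$ is a vertex subset whose induced subgraph is strongly biconnected. Since no vertex subset can properly contain $V$, this subset is automatically maximal with respect to that property, so by the definition of a strongly biconnected component $V$ is a strongly biconnected component of $G\setminus\{b\}$ (indeed the only one).

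Finally, since $v,w\in V$ and $V$ is a strongly biconnected component of $G\setminus\{b\}$, the vertices $v$ and $w$ belong to the same strongly biconnected component of $G\setminus\{b\}$, which is the assertion of the lemma. The only step that calls for the slightest care is the well-definedness remark in the first paragraph, namely that removing a non-b-bridge leaves the graph strongly connected; but this is immediate because every strongly biconnected graph is, by definition, strongly connected. Hence the statement follows directly from the definitions and there is no real obstacle to overcome.
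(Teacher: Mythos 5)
Your proof is correct and takes the same route as the paper, whose proof consists solely of the remark ``immediate from definition''; you have simply spelled out that unwinding (non-b-bridge $\Rightarrow$ $G\setminus\{b\}$ strongly biconnected $\Rightarrow$ $V$ is the unique strongly biconnected component of $G\setminus\{b\}$). No issues.
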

\begin{proof}
	immediate from definition.
\end{proof}
 Algorithm \ref{algo:algorrithmfor2edgebiconnectedblocks} shows an algorithm for computing all the $2$-edge biconnected blocks of a strongly biconnected directed graph.
\begin{figure}[h]
	\begin{myalgorithm}\label{algo:algorrithmfor2edgebiconnectedblocks}\rm\quad\\[-5ex]
		\begin{tabbing}
			\quad\quad\=\quad\=\quad\=\quad\=\quad\=\quad\=\quad\=\quad\=\quad\=\kill
			\textbf{Input:} A strongly biconnected directed graph $G=(V,E)$.\\
			\textbf{Output:} The $2$-edge biconnected blocks of $G$.\\
			{\small 1}\> Compute the b-bridges of $G$\\
			{\small 2}\> \textbf{If} $G$ has no b-bridges \textbf{then}.\\
			{\small 3}\>\> Output $V$.\\
			{\small 4}\> \textbf{else}\\
			{\small 5}\>\> Let $L$ be an $n\times n$ matrix.\\
			{\small 6}\>\> Initialize $L$ with $1$s.\\
			{\small 7}\>\> \textbf{for} each  b--bridge $b$ of $G$ \textbf{do} \\
			{\small 8}\>\>\> calculate the strongly biconnected components of  $G\setminus \lbrace b\rbrace$ \\
			{\small 9}\>\>\> \textbf{for} each pair $(x,y) \in V\times V$ \textbf{do} \\
			{\small 10}\>\>\>\> \textbf{if} $x,y$ in different strongly biconnected components of $G\setminus \lbrace b\rbrace$ \textbf{then}\\
			{\small 11}\>\>\>\>\> $L[x,y] \leftarrow 0$. \\
			{\small 12}\>\> $E^{eb} \leftarrow \emptyset$. \\
			{\small 13}\>\> \textbf{for} every pair $(x,y) \in V\times V $ \textbf{do} \\
			{\small 14}\>\>\> \textbf{if} $L[x,y]=1$ and $L[y,x]=1$ \textbf{then} \\
			{\small 15}\>\>\>\> $E^{eb} \leftarrow E^{eb}\cup\left\lbrace(x,y) \right\rbrace$  \\
			{\small 16}\>\> $G^{eb}\leftarrow (V,E^{eb})$\\
			{\small 17}\>\>  Compute all the blocks of size $\geq 2$ in  $G^{eb}$ and output them. 
		\end{tabbing}
	\end{myalgorithm}
\end{figure}

The correctness of this algorithm follows from Lemma \ref{def:lemmaforsb}, Lemma \ref{def:lemmafor2strongbiconnectedblocks}, and Lemma \ref{def:verticesarenotbbridges}.

\begin{theorem}
	Algorithm \ref{algo:algorrithmfor2edgebiconnectedblocks} runs in  $O(n^{3})$ time.
\end{theorem}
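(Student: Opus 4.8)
The plan is to bound the cost of each line of Algorithm~\ref{algo:algorrithmfor2edgebiconnectedblocks} and show the total is $O(n^3)$. First I would recall that the b-bridges of $G$ can be computed in linear time $O(n+m)$; this follows from the work of Jaberi~\cite{Jaberi2020} on b-bridges in strongly biconnected directed graphs (b-bridges are precisely the strong bridges $e$ for which the underlying graph of $G\setminus\{e\}$ fails to be biconnected, and both strong bridges and articulation points are computable in linear time). Since $m\le n^2$, line~1 costs $O(n^2)$. The test in line~2 and the output in line~3 are trivial, and allocating and initializing the $n\times n$ matrix $L$ in lines~5--6 costs $O(n^2)$.

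Next I would analyze the main loop in lines~7--11. The number of b-bridges is at most the number of edges, which is $O(n^2)$; however, the crucial observation is that $G$ is strongly biconnected, and the number of strong bridges in a strongly connected digraph is $O(n)$ (a classical fact, e.g.\ from Italiano et al.~\cite{ILS12}), so the number of b-bridges is $O(n)$ as well since every b-bridge is a strong bridge. For each such b-bridge $b$, computing the strongly biconnected components of $G\setminus\{b\}$ in line~8 takes $O(n+m)=O(n^2)$ time (these are obtained by intersecting strongly connected components with biconnected components of the underlying graph, each computable in linear time). The inner double loop over all pairs $(x,y)\in V\times V$ in lines~9--11 costs $O(n^2)$, assuming each vertex stores the label of its strongly biconnected component so the test in line~10 is $O(1)$. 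Hence the loop in lines~7--11 costs $O(n)\cdot O(n^2)=O(n^3)$.

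Then lines~12--16: initializing $E^{eb}$ is $O(1)$, and the double loop over all pairs in lines~13--15 runs $O(n^2)$ iterations each doing $O(1)$ work, so this is $O(n^2)$; the graph $G^{eb}=(V,E^{eb})$ has $O(n^2)$ edges and is built in $O(n^2)$ time. Finally, line~17 computes all blocks (biconnected components) of size at least $2$ of the undirected graph underlying $G^{eb}$, which by the linear-time block-finding algorithm~\cite{T72} takes $O(|V|+|E^{eb}|)=O(n^2)$ time, and outputting the blocks costs at most $O(n^2)$ in total. Summing all contributions, the dominant term is the $O(n^3)$ from lines~7--11, so Algorithm~\ref{algo:algorrithmfor2edgebiconnectedblocks} runs in $O(n^3)$ time.

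The main obstacle to watch for is the bound on the number of b-bridges: the naive reading of line~7 suggests iterating over up to $\Theta(n^2)$ edges, which together with the $O(n^2)$ work per iteration would give only an $O(n^4)$ bound. The argument that there are only $O(n)$ b-bridges—because they form a subset of the $O(n)$ strong bridges of the strongly connected graph $G$—is what drives the running time down to $O(n^3)$, and I would make sure this fact is stated explicitly with the appropriate citation.
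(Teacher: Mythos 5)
Your line-by-line accounting of lines 5--6 and 8--17 matches the paper's proof (the paper charges $O(nm)$ for line 1, linear time per strongly-biconnected-component computation, $O(b\cdot n^{2})$ for lines 7--11 where $b$ is the number of b-bridges, $O(n^{2})$ for lines 12--16, and linear time for the block computation). However, the step you yourself single out as the crux rests on a false claim. You assert that every b-bridge is a strong bridge, and you even characterize b-bridges as ``strong bridges $e$ for which the underlying graph of $G\setminus\{e\}$ fails to be biconnected.'' The definition is a disjunction, not a conjunction: $e$ is a b-bridge when $G\setminus\{e\}$ fails to be strongly connected \emph{or} when the underlying graph of $G\setminus\{e\}$ fails to be biconnected. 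A b-bridge therefore need not be a strong bridge, and the paper's own Figure \ref{fig:blockssb} exhibits one: after deleting $(2,15)$ the graph is still strongly connected (vertex $2$ still reaches everything via $(2,12)$, and $15$ is still reached via $(6,15)$ and $(1,15)$), so $(2,15)$ is not a strong bridge, yet it is a b-bridge because $12$ and $15$ lie in different strongly biconnected components of $G\setminus\{(2,15)\}$. Consequently your bound ``$b=O(n)$ because there are $O(n)$ strong bridges'' does not follow; you would additionally need to bound the number of non-strong-bridge edges whose deletion destroys biconnectivity of the underlying graph, which is a separate argument your proof does not contain. The same confusion undermines your claim that line 1 runs in $O(n+m)$ time; the paper only invokes the $O(nm)$ bound of \cite{Jaberi2020}, which is still within $O(n^{3})$, so nothing is gained or lost there.

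To be fair, the paper's own proof is terse on exactly this point: it stops at ``lines 7--11 take $O(b\cdot n^{2})$'' without justifying $b=O(n)$, which is what the stated $O(n^{3})$ bound actually requires. So you correctly identified the one place where a real argument is needed --- but the argument you supplied is wrong, and repairing it means bounding both kinds of b-bridges, not just the strong bridges.
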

\begin{proof}
The b-bridges of $G$ can be computed in $O(nm)$ time \cite{Jaberi2020}. Strongly biconnected components can be calculated in linear time \cite{WG2010}. Lines $7$--$11$ take $O(b.n^{2})$, where $b$ is the number of b-bridges in $G$. The time required for building $G^{eb}$ in lines $12$--$16$ is $O(n^{2})$. Moreover, the blocks of an undirected graph can be found in linear time using Tarjan's algorithm. \cite{T72,JS13}..
\end{proof}
\section{$2$-strong-biconnected blocks} 
In this section we illustrate some properties of $2$-strong-biconnected blocks. The strongly biconnected directed graph in Figure \ref{fig:stronglybiconnectedblocksdirectedgraph} has two $2$-strong biconnected blocks $ L_1=\left\lbrace 1,2,3,4 \right\rbrace $ and $L_2=\left\lbrace 3,4.5.6 \right\rbrace $. Note that $L_1$ and $L_2$ share two vertices. The intersection of any two distinct $2$-strong biconnected blocks contains at most $2$ vertices. Note also that the subgraph induced by the $2$-strong biconnected block $ L_1$ has no edges.
\begin{figure}[htp]
	\centering
	
	\begin{tikzpicture}[xscale=2]
	\tikzstyle{every node}=[color=black,draw,circle,minimum size=0.9cm]
	\node (v1) at  (0.5,0.4){$1$};
	\node (v2) at  (4,0){$2$};
	\node (v3) at (0.5, 6) {$3$};
	\node (v4) at (4,5.4) {$4$};
	\node (v5) at (0.5,12) {$5$};
	\node (v6) at (4,10.5) {$6$};
	\node (v7) at  (2,1.2){$7$};
	\node (v8) at (2,0) {$8$};
	\node (v9) at  (2,7){$9$};
	\node (v10)at  (2,5){$10$};
	\node (v11) at (2,13) {$11$};
	\node (v12) at  (2,10.5){$12$};
	\node (v13) at  (0,3.5){$13$};
	\node (v14) at (1,3.5) {$14$};
	\node (v15) at (4,2.5) {$15$};
	\node (v16) at (5,2) {$16$};
	\node (v17) at  (0,9){$17$};
	\node (v18) at (1,9) {$18$};
	\node (v19) at (4,7.5) {$19$};
	\node (v20) at (5,7) {$20$};
	
	\begin{scope}   
	\tikzstyle{every node}=[auto=right]   
	\draw [-triangle 45] (v1) to (v7);
	\draw [-triangle 45] (v7) to (v2);
	\draw [-triangle 45] (v2) to (v8);
	\draw [-triangle 45] (v8) to (v1);
	\draw [-triangle 45] (v3) to (v9);
	\draw [-triangle 45] (v9) to (v4);
	\draw [-triangle 45] (v4) to (v10);
	\draw [-triangle 45] (v10) to (v3);
	\draw [-triangle 45] (v5) to (v11);
	\draw [-triangle 45] (v11) to (v6);
	\draw [-triangle 45] (v6) to (v12);
	\draw [-triangle 45] (v12) to (v5);
	
	\draw [-triangle 45] (v1) to (v13);
	\draw [-triangle 45] (v13) to (v3);
	\draw [-triangle 45] (v3) to (v14);
	\draw [-triangle 45] (v14) to (v1);
	\draw [-triangle 45] (v4) to  (v1);
	\draw [-triangle 45] (v3) to (v2);
	\draw [-triangle 45] (v2) to (v16);
	\draw [-triangle 45] (v16) to (v4);
	\draw [-triangle 45] (v4) to (v15);
	\draw [-triangle 45] (v15) to (v2);
	
	\draw [-triangle 45] (v3) to (v17);
	\draw [-triangle 45] (v17) to (v5);
	\draw [-triangle 45] (v5) to (v18);
	\draw [-triangle 45] (v18) to (v3);
	\draw [-triangle 45] (v3) to  (v6);
	\draw [-triangle 45] (v4) to (v5);
	\draw [-triangle 45] (v4) to (v19);
	\draw [-triangle 45] (v19) to (v6);
	\draw [-triangle 45] (v6) to (v20);
	\draw [-triangle 45] (v20) to (v4);

	\end{scope}
	\end{tikzpicture}
	\caption{A strongly biconnected directed graph $G=(V,E)$. The vertices $2$ and $6$ are in the same $2$-strong block of $G$ but they do not belong to the same $2$-strong-biconnected-block of $G$ since they are not in the same strongly biconnected component of $G\setminus\left\lbrace  4\right\rbrace $. Moreover, $G$ has two $2$-strong biconnected blocks $ L_1=\left\lbrace 1,2,3,4 \right\rbrace $ and $L_2=\left\lbrace 3,4.5.6 \right\rbrace $. Note that  $|L_1\cap L_2|=2$  }
	\label{fig:stronglybiconnectedblocksdirectedgraph}
\end{figure}
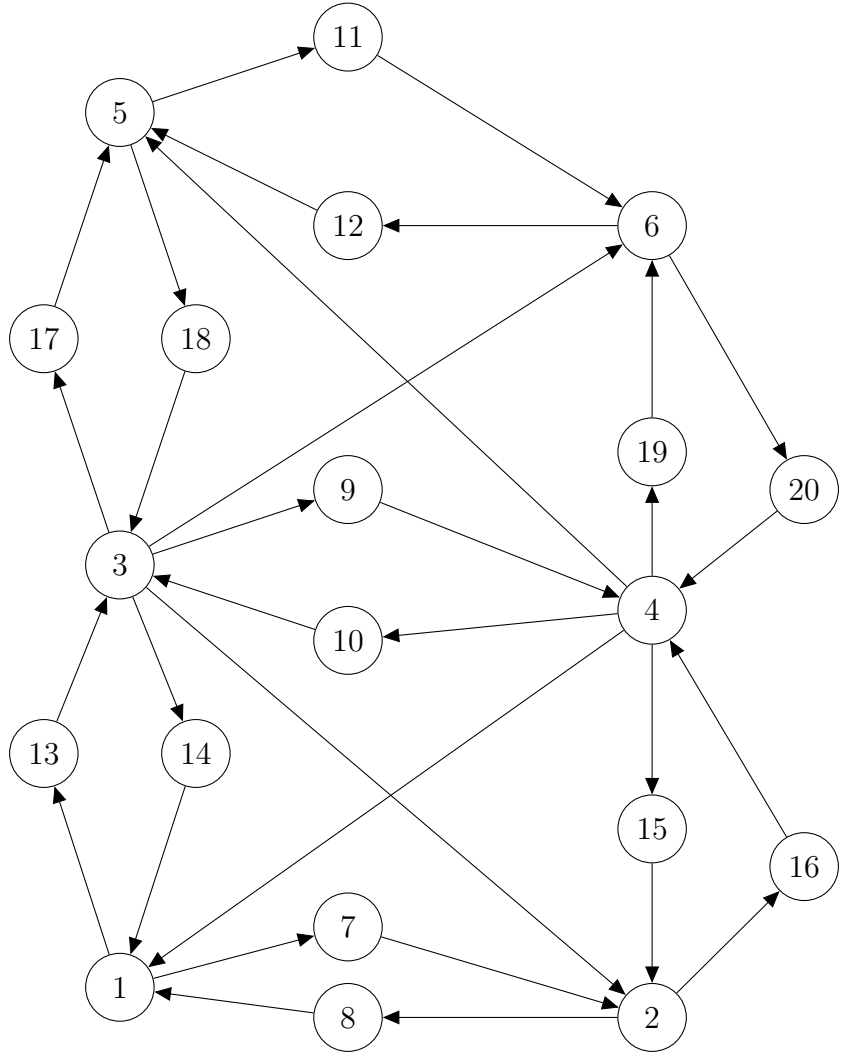

Let $G=(V,E)$ be a strongly biconnected directed graph. A $2$-vertex-strongly biconnected component $C_{2sb}$ is a maximal vertex subset $C_{2sb}\subseteq V$ such that the induced subgraph on $C_{2sb}$ is $2$-vertex-strongly biconnected. Each $2$-vertex-strongly biconnected component $C_{2sb}$ of $G$ is a subset of a $2$-strong-biconnected-block of $G$. Furthermore, each $2$-vertex-strongly biconnected component $C_{2sb}$ of $G$ is $2$-vertex connected. Therefore, the subgraph induced by $C_{2sb}$ contains at least $2|C_{2sb}|$ edges. In contrast to, the subgraphs induced by the $2$-strong-biconnected blocks do not necessarily contain edges.


\begin{thebibliography}{4}
	 	
	\bibitem {FILOS12} D. Firmani, G.F. Italiano, L. Laura, A. Orlandi, F. Santaroni, Computing strong articulation points and strong bridges in large scale graphs, SEA, LNCS $7276$, ($2012$) $195$--$207$. 
   	\bibitem {GILP14SODA} L. Georgiadis, G.F. Italiano, L. Laura, N. Parotsidis, $2$-Edge Connectivity in Directed Graphs, SODA ($2015$) $1988$--$2005$.
   \bibitem {GILP14VertexConnectivity} L. Georgiadis, G.F. Italiano, L. Laura, N. Parotsidis, $2$-Vertex Connectivity in Directed Graphs, ICALP $(1)2015: 605$--$616$
	\bibitem {G10} L. Georgiadis, Testing $2$-vertex connectivity and computing pairs of vertex-disjoint s-t paths in digraphs, In Proc. $37$th ICALP, Part I, LNCS $6198$ ($2010$) $738$--$749$.
		\bibitem{Jaberi15} R. Jaberi,
	Computing the 2-blocks of directed graphs. RAIRO - Theor. Inf. and Applic. $49(2) (2015) 93$--$119 $
	\bibitem{Jaberi2020} R. Jaberi,
	b-articulation points and b-bridges in strongly biconnected directed graphs. CoRR abs/2007.01897 (2020)
	\bibitem {ILS12} G.F. Italiano, L. Laura, F. Santaroni,
	Finding strong bridges and strong articulation points in linear time, Theoretical Computer Science $447$ ($2012$) $74$--$84$.
   \bibitem {JS13} J. Schmidt, A Simple Test on 2-Vertex- and 2-Edge-Connectivity, Information Processing Letters, $113$ ($7$) ($2013$) $241$–-$244$
	\bibitem {T72} R.E. Tarjan, Depth-first search and linear graph algorithms, SIAM J. Comput. $1$($2$) ($1972$) $146$--$160$
	\bibitem {T74} R.E. Tarjan, A note on finding the bridges of a graph, Information Processing Letters, SIAM J. Comput. $2$($6$) ($1974$.) $160$--$161$
    \bibitem{WG2010} Z. Wu, S. Grumbach,
    Feasibility of motion planning on acyclic and strongly connected directed graphs. Discret. Appl. Math. $158(9): 1017$--$1028 (2010)$	
\end{thebibliography}
\end{document}